\newtheorem{lemma}{Lemma}
\newtheorem{theorem}{Theorem}
\newtheorem{remark}{Remark}
\title{Self-Coordinated Corona Graphs: a model for complex networks} 
\author{Rohan Sharma, Bibhas Adhikari\\Centre for System Science, Indian Institute of Technology Jodhpur, India\\Department of Mathematics, Indian Institute of Technology Kharagpur, India} 
\date{}
\begin{document}
\maketitle
\begin{abstract}
{Recently, real world networks having constant/shrinking diameter along with power-law degree distribution are observed and investigated in literature. Taking an inspiration from these findings, we propose a deterministic complex network model, which we call Self-Coordinated Corona Graphs (SCCG), based on the corona product of graphs. As it has also been established that self coordination/organization of nodes gives rise to emergence of power law in degree distributions of several real networks, the networks in the proposed model are generated by the virtue of self coordination of nodes in corona graphs. Alike real networks, the SCCG inherit motifs which act as the seed graphs for the generation of SCCG.  We also analytically prove that the power law exponent of SCCG is approximately $2$ and the diameter of SCCG produced by a class of motifs is constant. Finally, we compare different properties of the proposed model with that of the BA and Pseudofractal scale-free models for complex networks.}
\end{abstract}

\section{Introduction}
Networks are everywhere around us. Due to their inherent non-trivial topological and functional characteristics, some of them are called complex networks, for instance, biological networks, food webs, internet, the world wide web, financial markets and social networks \cite{albert2002statistical,boccaletti2006complex,newman2003structure,dorogovtsev2002evolution,estrada2011structure}. The study of the facts responsible for structural and dynamical behaviour of these networks has generated a lot of interest in recent past. Several deterministic and stochastic models are proposed to replicate the structure and dynamics of real world complex networks supported with empirical data \cite{dorogovtsev2002pseudofractal,ravasz2003hierarchical,ravasz2002hierarchical,barabasi2001deterministic,barabasi1999emergence,ispolatov2005duplication,leskovec2010kronecker,dorogovtsev2000evolution,newman2002assortative,jung2002geometric}. The comparison of these models with real networks is done usually by comparing different properties of the networks including degree distribution, diameter or average path length, average clustering coefficient, frequencies of triangles, degree-degree correlation etc.

The Erd{\H o}s R{\'e}nyi model (henceforth denoted by `ER') \cite{boccaletti2006complex,erdds1959random} is regarded as one of the earliest random network models proposed in the literature. In this model, a fixed set of nodes $N$ is chosen as input and then $M$ edges are carved out randomly from these nodes while restricting multi-edges. However, it lacks the features of the real world networks, for example, power law degree distribution. A preliminary model exhibiting the power law degree distribution was proposed by Barab{\'a}si et al., known as the Barab{\'a}si-Albert (henceforth denoted by `BA') model \cite{barabasi1999emergence}. This is a stochastic model generated by two steps: (1) growth of the network with the addition of new nodes (2) probabilistic preferential attachment of the new nodes to the existing nodes based on their degree. This model lacks the degree-degree correlation of the nodes \cite{gross2009adaptive} which is a genuine property of most of the real networks, and frequencies of triangles(responsible for network transitivity) in the model is very less \cite{durak2012degree} which contradicts the property of most real networks. In \cite{barabasi2001deterministic}, the authors remarked that stochastic models lack any visual understanding of the scale-freeness and hence they introduced a hierarchical deterministic scale-free model. However, this model lacks clustering property of real-world networks. Another hierarchical deterministic scale-free model was presented in \cite{ravasz2003hierarchical} with high clustering coefficient, however other important properties are unexplored. Pseudo-fractal scale free model(henceforth denoted by `PFSF') is an another deterministic scale free model introduced in \cite{dorogovtsev2002pseudofractal}, in which, the network starts from a single edge at time $t=-1$ and at each timestep $t,$ ($\forall$ $t\ge 0$) new nodes are added to the network corresponding to each edge of the network generated at time $t-1$. All the above mentioned models except ER model has increasing diameter with the addition of the new nodes to the existing network which is being contradicted by an observation of constant/shrinking diameter phenomenon of certain real networks reported in \cite{leskovec2007graph}. In \cite{leskovec2010kronecker}, the authors have presented a deterministic and a stochastic model having the constant and shrinking diameter property respectively based on Kronecker product of graphs. Such networks are called Kronecker graphs (henceforth denoted by `KG').

\begin{table}
\centering
  \captionsetup{justification=raggedright,singlelinecheck=false}
  \caption{A statistical snapshot of the few published real world networks having shrinking diameter as investigated in \cite{leskovec2007graph} where $N$, $E$, $\overline{C}$, $r$  and $\gamma$ represents number of nodes, edges, average clustering coefficient, assortativity coefficient and power law exponent respectively. Collaboration network (AstroPh, cond-mat), citation network (cit-HepPh, cit-HepTh), autonomous systems(caida20071105) and US Patents are the used datasets. Blank entries indicate the data is unavailable. }
  \label{tbl:datasets}
  \begin{tabular}{l l l l l l l}
    \hline\hline
    Datasets & $N$ & $E$ & $\overline{C}$ & $r$ & $\gamma$ & Ref(s).\\ \hline
    AstroPh & $18K$ & $198K$ & $0.63$ & $0.2$ & $2.86$ & \cite{durak2012degree,konect:2015:ca-AstroPh}\\ 
    cond-mat & $38K$ & $58K$ & & $-0.12$ & $2.8$ & \cite{konect:2015:opsahl-collaboration,konect:newman01}\\
    HepPh & $34K$ & $420K$ & $0.3$ & $-0.006$ & $3.04$ & \cite{durak2012degree,konect:2015:cit-HepPh}\\
    HepTh & $27K$ & $352K$ &  & $-0.03$ & $3.12$ & \cite{konect:2015:cit-HepTh}\\
    caida20071105 & $26K$ & $53K$ & $0.21$ & $-0.19$ & $2.09$& \cite{durak2012degree,konect:2015:as-caida20071105}\\
    US Patents & $3,774K$ & $16,518K$ & & $0.16$ & $4$ & \cite{b376,konect:2015:patentcite}\\
    \hline      
  \end{tabular}
\end{table}

In this paper, we propose a deterministic complex network model by gainfully using the concept of self co-ordination of nodes in a corona graph which is proposed in \cite{sharma2015spectra,sharma2015corona}. The corona graphs are modelled by using corona product of two graphs \cite{parsonage2011generalized,frucht1970corona,sharma2015spectra,sharma2015corona}. It is proved in \cite{sharma2015corona} that the degree distribution of corona graphs do not follow power law, however, the degree distribution decays exponentially when the seed graph is regular. Also, the diameter of a corona graph increases in each iteration of the formation of the corona graph for any seed graph. As it is observed that self coordination among the nodes help emerge the scale-free degree distribution in real world networks \cite{barabasi1999emergence}, we show that self coordination of nodes in corona graphs produces networks with hubs having a power-law degree distribution. We assume that the nodes which get linked to the network after each corona product, self-coordinate among themselves and select hubs from the nodes added in the preceding step of the formation of the corona graph. To the best of our knowledge, this is the first model of its kind to model complex networks which uses self-coordination as the prime concept for emergence of  power law degree distribution. We call the complex networks generated by this method as self-coordinated corona graphs (`SCCG'). We show that the power law exponent of degree distribution in SCCG is approximately $2$. It is interesting to observe that SCCG inherit motifs, for instance, the seed graph, which defines the SCCG, is a motif. We further show that SCCG have other significant properties of real networks including high clustering, constant diameter, disassortative behaviour, and high frequency of triangles. We mention that several real world networks, for example, the networks mentioned in Table ~\ref{tbl:datasets} are having power law degree distribution with shrinking diameter \cite{durak2012degree, leskovec2007graph} phenomena. In addition, the assortativity and disassortativity mixing pattern have  also been observed in these networks. Thus, SCCG can be a potential model to capture various structural and dynamical properties of these networks. We have also compared the above mentioned properties of the SCCG with the properties of networks generated by BA and PFSF models.

We organize the paper as follows. In Sec.~\ref{sec:CG}, we describe the corona graphs proposed in \cite{sharma2015spectra}. Sec.~\ref{sec:SCCG} and Sec.~\ref{sec:prop} include the algorithm for generation of SCCG, proof of the degree sequence of the hubs in SCCG, an analytic proof of the power law exponent of the hubs, and the constant diameter phenomena. In Sec.~\ref{sec:con}, we conclude the content of the paper.


\section{\label{sec:CG}Corona graphs}
In this section, we discuss the generation of corona graphs and their properties. It should be noted that the corona graphs and the subsequent improvement of this model in the next section are only applicable for simple graphs.
 
Let $G=(V,E)$ be a graph with the node set $V=\{v_1,...,v_n\}$ such that $|V|=n$, and $E$ the set of edges. The adjacency matrix $A(G)=[a_{v_i v_j}]$ of dimension $|V|\times |V|$ associated with $G$ is defined by $a_{v_i v_j}=1$ if $(v_i,v_j)\in E$, and $a_{v_i,v_j}=0$ otherwise. The columns and rows of $A(G)$ are labelled by the nodes of $G$. 

Let $G^{(0)}= G$ be the fixed seed graph for the corona graphs \cite{sharma2015spectra,sharma2015corona} to be generated by defining
\begin{equation}
G^{(m+1)}=G^{(m)}\circ G
\end{equation}
where $m(\ge 0)$ is a large natural number and $\circ$ denotes the corona product of graphs.

The properties of the corona graphs are as follows
\begin{enumerate}
\item[1.] The number of nodes in $G^{(m)}$ for any positive integer $m$ is \cite{sharma2015spectra,sharma2015corona}
\begin{equation}
|V^{(m)}|=n(n+1)^{m}.
\end{equation}
\item[2.] The number of nodes added in $G^{(i)} = G^{(i-1)}\circ G$ is $n^2 (n+1)^{i-1}$ where $i\ge 1$.\cite{sharma2015corona}
\item[3.] If $|E|$ and $n$ are the number of edges and nodes in the seed graph $G^{(0)}$ respectively, then the number of edges in $G^{(m)}$ is \cite{sharma2015spectra,sharma2015corona}
\begin{equation}\label{eqn:3}
|E^{(m)}|=(|E|+(|E|+n)((n+1)^m-1)).
\end{equation}
\item[4.] It is evident from the \cite{sharma2015corona} that the diameter of a corona graph $(G^{(m)})$ is $D^{(0)}+2m$ where $D^{(0)}$ is the diameter of the corresponding seed graph $G^{(0)}$. 
\end{enumerate}

Although the corona graph model has fat tailed degree distribution, which is found abundantly in many real world networks, it lacks the power-law degree distribution which is a non-trivial characteristic associated with most of the real world networks \cite{barabasi1999emergence, sharma2015spectra, sharma2015corona, misiewicz2011fat, rachev2003handbook}. Moreover, as it follows from above that the diameter of corona graphs is monotonically increasing, it contradicts the recent observation of constant/shrinking diameter in real world networks \cite{leskovec2007graph}. These drawbacks in the model of corona graphs motivated us to introduce the idea of self-coordination in corona graphs.


\section{\label{sec:SCCG}Self-coordinated corona graphs}
In this section, we first discuss the concept of self-coordination and its significance in real world networks. We elaborate this concept in the context of self-coordination of nodes in the corona graphs. The self-coordination refers with the links made by the new nodes added in  a network with the existing nodes, especially hubs, in such a manner that leads to the emergence of empirically observed properties of real world networks, for example, in power law degree distribution. This power-law behaviour is observed, for instance, in collaboration networks, citation network, autonomous systems etc. as mentioned in Table ~\ref{tbl:datasets}.

Let the self-coordination after every $G^{(i)}=G^{(i-1)}\circ G$ (where $i\ge 1$) be denoted by $SC^{(i)}$. By taking an inspiration of the self-coordination described above, we propose the algorithm such that the nodes of the corona graphs self-organize among themselves after each $G^{(i)}$ ($\forall$ $i\in[1,m]$). The self-organization happens between $n^2(n+1)^{i-1}$ nodes added in the $i^{th}$ step during formation of $G^{(m)}$ $\forall$ $i\le m$ of $G^{(i)}$ and the selected hub nodes which are chosen based on their popularity in the existing network. We consider popularity as the degree of the node. If all nodes have same degree then any node could be designated as a hub.

\begin{figure}
\centering
  \includegraphics[width=0.5\textwidth]{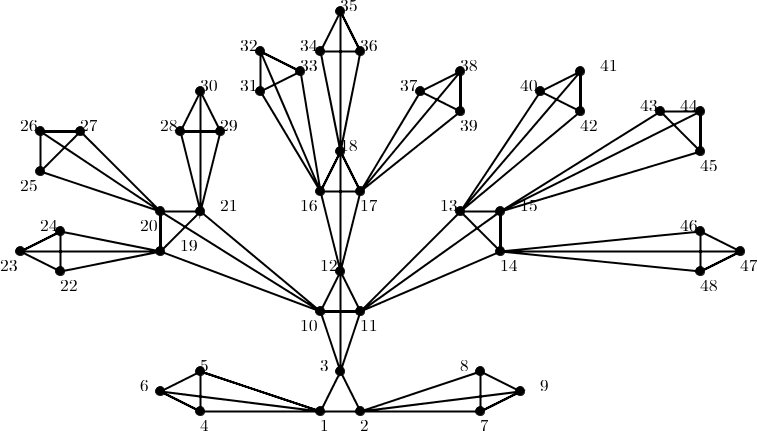}
\captionsetup{justification=raggedright,singlelinecheck=false}
\caption{\label{3}A small section of well-labelled corona graph generated with seed graph $G^{(0)}=K_3$.}
\end{figure}

We discuss few preliminaries that would help in comprehending the terminologies used in the algorithm, for instance, the concept of immediate ancestor hubs. Let the seed graph be $G^{(0)}=K_3$ for generation of the corona graphs. Fig.~\ref{3} represents a small section or part of these corona graphs. Let nodes $3,12,15,18,20$ be the hubs to which new nodes attach during the self-organization step. Nodes $4,\ldots,12$ have the immediate ancestor hub as $3$ while the nodes $13,\ldots,21$ have $3,12$ as ancestor hubs. Similarly, nodes $22,\ldots,30$ have  $3,12,20$; $31,\ldots,39$ have $3,12,18$; and $40,\ldots,48$ have $3,12,15$ as ancestor hubs respectively.

The second concept that we use in the algorithm is the set of skeletal(SK) and the offshoot(OS) nodes. The nodes of $G^{(0)}$ in Step $0$ of the algorithm are considered as skeletal nodes. The nodes which get attached first time to a skeletal node in a $G^{(i)}$ (where $i\ge 1$) will be considered as skeletal nodes whereas the new nodes which get attached to the skeletal node second time onwards will be considered as offshoot nodes. The nodes which get linked with an offshoot node will always be the considered as offshoot nodes. We elaborate this in Fig.~\ref{4}. We denote the set of skeletal nodes by SK and off-shoot nodes by OS respectively.

\begin{figure}
\centering
\subfloat[\label{4a}]{
  \includegraphics[width=0.05\textwidth]{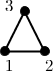}
}
\hspace{0mm}
\subfloat[\label{4b}]{
  \includegraphics[width=0.06\textwidth]{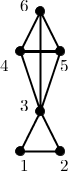}
}
\hspace{0mm}
\subfloat[\label{4c}]{
  \includegraphics[width=0.3\textwidth]{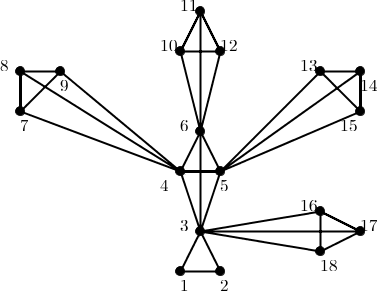}
}
\hspace{0mm}
\subfloat[\label{4d}]{
  \includegraphics[width=0.45\textwidth]{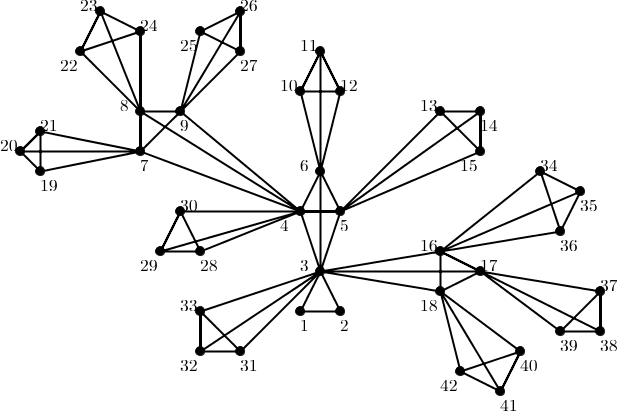}
}
\captionsetup{justification=raggedright,singlelinecheck=false}
\caption{\label{4}A small section of corona graphs (a) $G^{(0)}=K_3$ (b) $G^{(1)}$ (c) $G^{(2)}$ (d) $G^{(3)}$.}
\end{figure}
\begin{figure}
\subfloat[\label{5a}]{
  \includegraphics[width=0.05\textwidth]{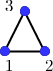}
}
\hspace{5mm}
\subfloat[\label{5b}]{
  \includegraphics[width=0.4\textwidth]{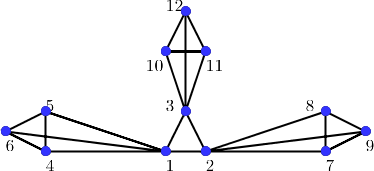}
}
\hspace{5mm}
\subfloat[\label{5c}]{
  \includegraphics[width=0.4\textwidth]{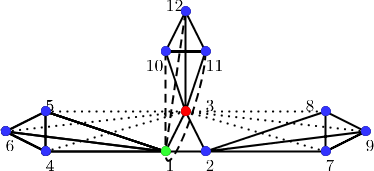}
}
\hspace{0mm}
\subfloat[\label{5d}]{
  \includegraphics[width=0.4\textwidth]{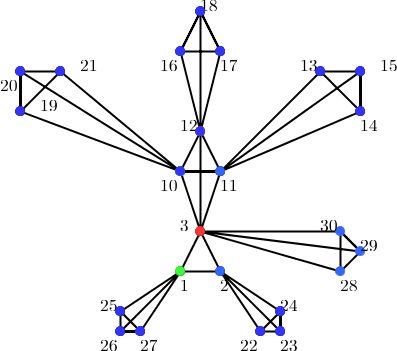}
}
\hspace{20mm}
\subfloat[\label{5e}]{
  \includegraphics[width=0.4\textwidth]{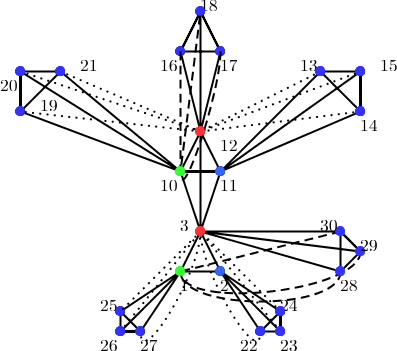}
}
\captionsetup{justification=raggedright,singlelinecheck=false}
\caption{\label{5} (a) In $G^{(0)}=K_3$, all are simple nodes. (b) In $G^{(1)}$, all nodes are simple nodes. (c) In $SC^{(1)}$, node $3$ is hub, node $1$ is non-hub and rest are simple nodes. (d) This sub-figure is showing a small part of $G^{(2)}$ with nodes $\{4,5,\ldots,9\}$ and the corresponding edges to hubs and non-hubs are removed to avoid complexity. (e) Node set $A=\{13,14,15,19,20,21\}$ and $B=\{16,17,18\}$ select node 12 as hub and node 10 as non-hub. $A\bigcup B$ would also connect with immediate ancestor hub $3$. Since, node $\{3\}$ and $\{1\}$ are already decided as hub and non-hub by the node set $C=\{4,5,\ldots,12\}$ and hence, $\{22,23,\ldots,30\}$ are connected to them accordingly. The red, green and blue colour nodes in the figure are hubs, non-hubs and simple nodes respectively.}
\end{figure}

Suppose an arbitrary skeletal node is denoted by $n_x$ and initially $SK=OS=\emptyset.$ Consider the seed graph $G^{(0)}=K_3$. Initially, in $G^{(0)}$, nodes $SK=SK$ $\bigcup$ $\{1,2,3\}$ and $OS=\emptyset$. In $G^{(1)}$, $(3,n_x)=0$, hence, $SK=SK$ $\bigcup$ $\{4,5,6\}$ and $OS=\emptyset$. Similarly, in $G^{(2)}$, $(4,n_x)=(5,n_x)=(6,n_x)=0$, hence $SK=SK$ $\bigcup$ $\{7,\ldots,15\}$. But $(3,n_x)=1$, hence $OS=OS$ $\bigcup$ $\{16,17,18\}$. Similarly, in $G^{(3)}$, $(7,n_x)=(8,n_x)=(9,n_x)=0$, $(4,n_x)=1$ and $(3,n_x)=1$, hence $SK=SK$ $\bigcup$ $\{19,\ldots,27\}$, $OS=OS$ $\bigcup$ $\{28,29,30\}$ and $OS=OS$ $\bigcup$ $\{31,32,33\}$. Now, nodes $A=\{34,\ldots,42\}$ are connected to the offshoot nodes, hence, $OS=OS$ $\bigcup$ $A$. The enrichment of sets $SK$ and $OS$ would be continued in the same way for future $G^{(i)}$.

The third concept is that there are $3$ types of nodes in SCCG -- $(1)$ hubs $(2)$ non-hubs and $(3)$ simple nodes. We would elaborate about them with the help of Fig.~\ref{5} which is again generated with $G^{(0)}=K_3$ as the seed graph. Initially, in $G^{(0)}$, all nodes are considered as simple nodes (shown in blue colour in Fig.~\ref{5a}). In $G^{(1)}$, all nodes are again considered as simple nodes (as in Fig.~\ref{5b}). Now, during $SC^{(1)}$, node $3$ is selected as hub (shown in red colour in Fig.~\ref{5c}) and hence, nodes $B=\{4,5,\ldots,9\}$ are connected to it. Since, nodes $C=\{10,11,12\}$ are already connected to the hub, so it will get linked with any other node choosing from $\{1,2\}$. This choice could be done again based on degree or randomly or deterministically. Since we are generating SCCG deterministically, we have chosen node $\{1\}$ deterministically and hence, elements of $C$ would connect with $\{1\}$ and this node is to be designated as non-hub (shown in green colour in Fig.~\ref{5c}) while node $\{2\}$ and $\{4,5,\ldots,12\}$ would remain as simple nodes (as in Fig.~\ref{5c}). The same methodology would be followed for any $G^{(i)}$ followed by $SC^{(i)}$. This hub or non-hub selection process is done only by skeletal nodes and those offshoot nodes which are connected directly with the skeletal nodes (shown in Fig.~\ref{5d} and Fig.~\ref{5e} and the explanation is provided under the figure). For instance, in $G^{(i)}$ (where $i\ge 3$), the offshoot nodes attached to nodes $ \{10\}$, $\{11\}$ and $\{12\}$ would be connected to the their skeletal hubs $\{3,12\}$. Those offshoot nodes which are connected to offshoot nodes only would be connected to their immediate ancestor hubs of the skeletal nodes. For instance, in $G^{(i)}$ (where $i\ge 3$), any node attached to offshoots nodes $\{22,23,24\}$, $\{25,26,27\}$ and $\{28,29,30\}$ (shown in Fig.~\ref{5e}) would be connected to $\{2\}$, $\{1\}$ and $\{3\}$ respectively which are their immediate ancestor hubs in the skeletal. Hence, all the offshoot nodes are always simple nodes. 

Let $V^{(i)}$ denote the node set associated with the corona graph $G^{(i)}=G^{(i-1)}\circ G$ such that $|V^{(i)}|=n^2(n+1)^{i-1}$. Suppose the set of all nodes linked to skeletal nodes in $G^{(i)}$ is denoted by $X$ while the set of all offshoot nodes connected to skeletal nodes directly (i.e. adjacently) in $G^{(i)}$ is denoted by $Y$. We now provide our algorithm comprised of the above concepts as follows.

\subsection*{Algorithm}
\begin{enumerate}
\item[(1)] Perform $G^{(1)}=G^{(0)}\circ G^{(0)}$. In $SC^{(1)}$, the new  $|V^{(2)}-V^{(1)}|=n^2$ nodes self coordinate by selecting the hubs and non-hubs among the old nodes and get linked with them.
\item[(2)] Perform $G^{(i)}=G^{(i-1)}\circ G^{(0)}$ (where $i\ge 2$). In $SC^{(i)}$, nodes in $G^{(i)}$ self coordinate by the steps below \begin{enumerate} 
\item nodes in the skeletal node set $X$ and off-shoot node set $Y$ that appeared in the $i$-th step of the corona product select the hubs and non-hubs from the nodes appeared in the step $(i-1)$ (of the corona product) and get linked with them. 
\item The nodes in $X$ and $Y$ also get linked with their immediate ancestors hubs of the skeletal nodes appeared in step $i-1$ (of the corona product). The remaining offshoot nodes (i.e. $n^2(n+1)^{i-1}-(|X|+|Y|)$) which appeared in the $i$-th step of the corona product (but not directly linked with skeletal nodes of $i-1$-th step), get linked with their immediate ancestor hubs of the skeletal nodes in $G^{(i-1)}$.\end{enumerate}
\end{enumerate}

We mention that the idea of self coordination step as described in the algorithm above is inspired by the phenomena of existence of motifs in real world networks. Recall that, a motif is a small graph which is present frequently in the entire network and motifs are considered as the building blocks of a complex network \cite{yaveroglu2014ergm,milo2002network}. Motifs are identified in different real networks including transcription networks, worldwide web networks, and social networks as mentioned in \cite{yaveroglu2014ergm} and the references therein. In \cite{prvzulj2004modeling, prvzulj2007biological}, the authors generalized the concept of network motifs by referring them as graphlets and mentioned different types of graphlets having number of nodes $|V|\in \{2,3,4,5\}$ \cite{milo2002network, prvzulj2004modeling, prvzulj2007biological}. In the proposed model, a motif is the seed graph which defines the corona graphs. Hence, if one starts constructing SCCG by considering the seed graph as a clique($K_n$, $\forall n\ge 2$), tree($S_n$, $\forall n\ge 3$), circuit($C_n$, $\forall n\ge 4$), path($P_n$, $\forall n\ge 4$) or any arbitrary graph, the SCCG would inherit the chosen seed graph as a motif.
\section{\label{sec:prop}Properties of the SCCG}
In this section, we prove that the hubs of SCCG follow power law degree distribution with exponent approximately $2.$ We will also present here the diameter of SCCG (along their analytical proof) generated by a few special types of seed graphs.
\subsection{Degree Distribution}
The probability distribution of the degree of the nodes in a graph is considered as the degree distribution of the graph. In the proposed model, we describe the degree distribution in accordance with the degree of the hubs because the probability distribution of the graph largely depends on the degree of the hub nodes. First, we determine the degree sequence of the hubs of SCCG. 

Consider $G^{(m)}$ constructed by the seed graph $G^{(0)}$ of order $n.$ Suppose $k^0_{max}$ is the maximum degree of a node in $G^{(0)}.$ Then, a node $h$ in $SC^{(m)}$ is considered as a hub if the degree of $h$ lies in the interval $[k^0_{max}+m+n^2,k^0_{max}+n((n+1)^m-1)].$ 

Note that there is only one hub in $SC^{(m)}$ of degree $k^0_{max}+n((n+1)^m-1)$ and this node is appeared in $G^{(0)}.$ Further, the increment in the degree of hub nodes for any hub in $SC^{(i)}$ is $\sum\limits_{l=1}^{i}n^2(n+1)^{l-1}$. We will use these facts while determining the degree of hubs as follows.
\begin{itemize}
\item[(a)] Hubs of degree $k^0_{max}+n((n+1)^m-1):$

This node is selected as the only hub in $SC^{(1)}$. In each $SC^{(i)}$, $i\ge 1$ the degree of this hub would be increased by $n^2(n+1)^{i-1}$. Hence, the degree of this hub in $SC^{(m)}$= (Degree of node in $G^{(0)})+($Edges from $V^{(1)})+($Edges from $V^{(2)})+\ldots+($Edges from $V^{(m)})=(k^0_{max})+(n^2)+(n^2(n+1))+\ldots+n^2(n+1)^{m-1}=k^0_{max}+n((n+1)^m-1)$.

\item[(b)] Hubs having degree $k^0_{max}+(m-j+1)+n((n+1)^j-1)$ where $1\le j\le m-1:$

Let $v$ be a hub appeared in the $(m-j)^{th}$ step of the network formation. Since a hub should be the node of maximum degree of the seed graph, the number of edges connected to a hub node is given by
\begin{eqnarray*} && (k^0_{max}+1)+(\mbox{Number of hubs up to}\,\, (m-j)^{th} \mbox{step}) +(\mbox{Number of links from the nodes}\\ && \mbox{appeared added after}\,\, \mbox{from} \,\, (m-j+1)^{th}\,\, \mbox{to} \,\, m^{th})\\
&=& k^0_{max}+1+(m-j)+\sum\limits_{l=1}^{m-j}n^2(n+1)^{l-1}+ \sum\limits_{l=1}^{j}n^2(n+1)^{l-1}\\
&=& k^0_{max}+(m-j+1)+n((n+1)^j-1).\end{eqnarray*}
\end{itemize}

It may be noted that for any hub $h\in V^{(0)}$ of SCCG, all the nodes added in the corona product process $G^{(i)}, i\ge 2,$ are adjacent to $h.$ Thus, for SCCG with seed graph $G^{(0)}=\{K_n,S_n\},$ where $n$ is the number of nodes in $G^{(0)}$, any hub in $SC^{(1)}$ is adjacent to all the nodes of the SCCG. In general, if $G^{(0)}$ is any seed graph having a node $h$ of degree $(n-1),$ $h$ is adjacent to all the nodes in the resultant SCCG.

It is also easy to verify that the number of skeletal nodes, which appear in the $i$-th step of the corona graph during the process of generating SCCG and get linked with skeletal nodes which appeared in the $i-1$-th step, is $n^{i+1}.$ Consequently, the frequency of hubs which appear in the $i$-th step of the formation of SCCG is $n^{i-1}, i\ge 1$ where $n$ represents the nodes in the seed graph $G^{(0)}.$

It follows from the above that there is a large difference between the degrees of the hubs. Thus, the power-law exponent could be calculated by using the cumulative degree distribution which can be derived from the cumulative frequency. If $k$ and $k^{'}$ are the degree instances of any two hubs of a resultant SCCG such that $k^{'}\ge k$, then the cumulative frequency of the hubs is given by
$$\nu(k^{'},k)= \frac{n(n^{m-j}-1)}{n-1}$$ where $j\in [1,m-1]$. 

Let $j$ be a hub of degree $d_j$. 
Probability that a node having degree $k=d_j$ is\\
\begin{equation*} \label{eq1}
\begin{split}
(k=d_j) &= \dfrac{\dfrac{n(n^{m-j}-1)}{(n-1)}}{n(n+1)^m} \\
 & = \frac{(n^{m-j}-1)}{(n-1)(n+1)^m}
\end{split}
\end{equation*}

\noindent Now, as we had determined above the degree of hubs, 
$$d_j=k^0_{max}+(m-j+1)+n((n+1)^j-1).$$
Claim: 
\begin{equation*}
(k^0_{max}+(m-j+1)+n((n+1)^j-1))^{1-\gamma} = \frac{(n^{m-j}-1)}{(n-1)(n+1)^m}
\end{equation*}
If $m\gg 1$ and $j\rightarrow (m-1)$, $\dfrac{(n^{m-j}-1)}{(n-1)(n+1)^m}\approx (n+1)^{-m}$ and $n(n+1)^j\gg (k^0_{max}+(m-j+1)-n)$. Hence,
\begin{equation*}
(n(n+1)^j)^{1-\gamma} \approx (n+1)^{-m}.
\end{equation*}

\noindent Taking logarithm in both sides, we get
\begin{equation*}
(1-\gamma)=\frac{-m\ln(n+1)}{\ln n+j\ln(n+1)}
\end{equation*}
Since $m\gg 1$ and $j\rightarrow (m-1)$, hence $j\gg 1$, $j\ln(n+1)\gg \ln n$ and $\frac{m}{j}\rightarrow 1$. Therefore, 
\begin{equation*}
\begin{split}
(1-\gamma) &\approx \frac{-m\ln (n+1)}{j\ln(n+1)}\\
		\gamma&\approx 2.
\end{split}
\end{equation*}

\subsection{Density of SCCG}
Density of a network is an important concept to determine the sparsity of the network. The density of an undirected network $G=(V,E)$ is defined by \cite{laurienti2011universal}
\begin{equation}
d=\frac{2|E|}{|V|(|V|-1)}
\end{equation}
where $|E|$ and $|V|$ represent the number of edges and nodes respectively in $G$. It is evident that $0<d\le 1.$ If $d\ll 1$, then the network is sparse. It is noted in \cite{i2003optimization} by the authors that many real world network have $d\in[10^{-5},10^{-1}]$. In \cite{zhou2006hierarchical}, the authors remarked by specifying the example of neural networks that sparsity saves the energy of the network without effecting its functionality. The network sparsity is also observed in other biological networks like metabolic network (\cite{wagner2001small} and the references therein). 

Here, we numerically calculate the densities of different SCCG generated by a few seed graphs (Fig.~\ref{6}) having $12,000$ nodes in Table \ref{tbl:density}. The Table shows that sparsity of SCCG is similar to that of real networks.
\begin{figure}
\centering
\subfloat[\label{6a}$G_1$]{
  \includegraphics[width=0.05\textwidth]{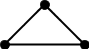}
}
\hspace{2mm}
\subfloat[\label{6b}$G_2$]{
  \includegraphics[width=0.05\textwidth]{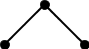}
}
\hspace{2mm}
\subfloat[\label{6c}$G_3$]{
  \includegraphics[width=0.04\textwidth]{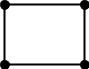}
}
\hspace{2mm}
\subfloat[\label{6d}$G_4$]{
  \includegraphics[width=0.04\textwidth]{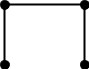}
}
\hspace{2mm}
\subfloat[\label{6e}$G_5$]{
  \includegraphics[width=0.045\textwidth]{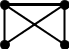}
}
\hspace{2mm}
\subfloat[\label{6f}$G_6$]{
  \includegraphics[width=0.045\textwidth]{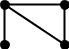}
}
\captionsetup{justification=raggedright,singlelinecheck=false}
\caption{\label{6} Seed graphs used for determining the density in Table ~\ref{tbl:density}.}
\end{figure}
\begin{table}[h]
	\captionsetup{justification=raggedright,singlelinecheck=false}
	\caption{Density(d) of different SCCGs (having $12,000$ nodes) generated by seed graphs of Fig.~\ref{6}.}
	\label{tbl:density}
		\begin{tabular}{l r | l r | l r}
		\hline\hline
		$G^{(0)}$ & Density(d) & $G^{(0)}$ & Density(d) & $G^{(0)}$ & Density(d)\\ \hline
		{$G_1$} & $8.46\times 10^{-4}$ & {$G_2$} & $7.92\times 10^{-4}$ & {$G_3$} & $8.4\times 10^{-4}$\\
		{$G_4$} & $8\times 10^{-4}$ & {$G_5$} & $8.8\times 10^{-4}$ & {$G_6$} & $8.4\times 10^{-4}$\\
		\hline
		\end{tabular}
\end{table}
\begin{figure}
\centering
\subfloat[\label{8a}]{
  \includegraphics[width=0.35\textwidth]{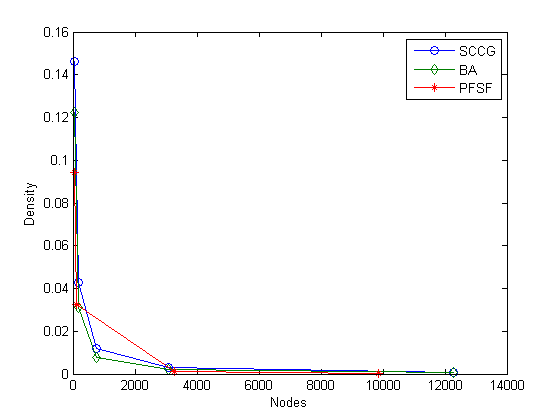}
}
\hspace{0mm}
\subfloat[\label{8b}]{
  \includegraphics[width=0.35\textwidth]{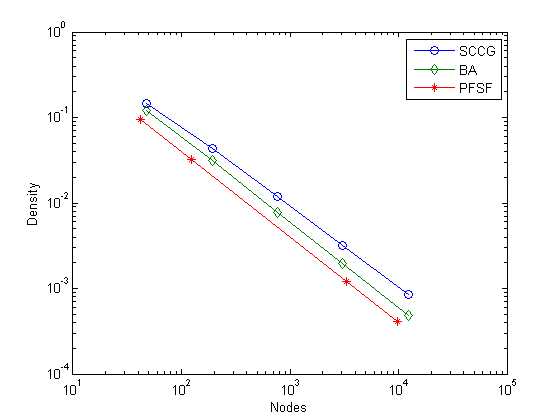}
}
\captionsetup{justification=raggedright,singlelinecheck=false}
\caption{\label{8} Comparison of the density of the SCCG, BA and FSF model (a) Linear plot (b) Loglog plot.}
\end{figure}
The Fig.\ref{8} is displaying the density of the SCCG, BA and FSF model with $G^{(0)}=K_3$ as the seed graph. 
\subsection{Diameter}
Diameter of a network is the longest shortest path between any pair of nodes in the network. As usual, finding the exact formula of the diameter of SCCG generated by any seed graph $G^{(0)}$ is not an easy task. In this section, we determine diameter of SCCG generated by a few special seed graphs. 

We use the notation $G^{(0)}_{k^{i}}$ to denote the $k^{th}$ seed graph $G^{(0)}$ (having $n$ nodes) added in the $i$-th $(i\leq m)$ step in the formation of the corona graph $G^{(m)}$. The $j$-th node in $G^{(0)}_{k^{i}}$ is denoted  by $v^{(i)}_{k^{j}}$ where $j=1:n.$ Then we have the following lemma.

\begin{lemma}\label{Lem:LSD}
Let $V^{(i)}$ be the set of nodes added in SCCG $G^{(m)}$ in the $i$-th $(0\leq i\leq m)$ step of its formation.  Then the longest shortest distance between a pair of nodes in $\bigcup_{a=0}^{m}V^{(a)}-V^{(0)}$ is $2$.
\end{lemma}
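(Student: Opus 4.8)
The plan is to exploit the structural dichotomy established in the preceding discussion: every node added after $G^{(0)}$ is either a skeletal node that gets linked (via the self-coordination step) to a hub appearing in an earlier step, or an offshoot node that gets linked to its immediate ancestor hub among the skeletal nodes. The key observation to establish first is that there exist hubs in $V^{(0)}$ that are \emph{universally adjacent} to all subsequently added nodes. Indeed, the excerpt already notes that if $G^{(0)}$ possesses a node $h$ of degree $(n-1)$ (which holds for the relevant seed graphs, and in particular any $h\in V^{(0)}$ has all nodes of $G^{(i)}$, $i\ge 2$, adjacent to it), then $h$ is adjacent to every node of the resultant SCCG. So first I would fix such a universal hub $h\in V^{(0)}$ and record that for the purposes of this lemma $h$ acts as a common neighbour candidate.

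Next I would take an arbitrary pair of distinct nodes $u,v \in \bigcup_{a=0}^{m}V^{(a)}-V^{(0)}$, i.e. nodes added in steps $1$ through $m$, and argue that their distance is at most $2$ by producing a common neighbour, and that it is exactly $2$ (not $1$) in at least one case so the supremum is attained. The heart of the argument is that every such node $u$ is connected, through the corona product together with the self-coordination rule, back to a hub lying in $V^{(0)}$: a skeletal node links to an ancestor hub and an offshoot node links to its immediate ancestor hub in the skeletal chain, and iterating these ancestor relations terminates at a node of $G^{(0)}$. Thus each of $u$ and $v$ is adjacent to some node of $V^{(0)}$; invoking the universal hub $h$, both $u$ and $v$ are adjacent to $h$, so $u - h - v$ is a path of length $2$, giving $d(u,v)\le 2$. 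Since generically two nodes introduced in different steps (or two distinct skeletal/offshoot nodes hanging off different ancestors) are not themselves adjacent, the distance equals $2$, and the longest shortest distance over the whole set is therefore exactly $2$.

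The main obstacle I anticipate is making the phrase ``linked back to a hub in $V^{(0)}$'' fully rigorous and uniform across the three node types (skeletal, offshoot directly attached to skeletal, and offshoot attached only to offshoots). One must check that the chain of immediate-ancestor-hub relations does not stall at some intermediate hub that fails to be adjacent to $h$; this is where the universal adjacency of $h$ to all nodes of $G^{(i)}$ for $i\ge 2$ is essential, since it short-circuits the need to trace the ancestor chain all the way down — I would lean on the fact, stated just before the lemma, that every node added in steps $i\ge 2$ is adjacent to each $h\in V^{(0)}$. A secondary subtlety is handling nodes added in step $1$ separately (they may already be adjacent to $h$ through the original corona product $G^{(1)}=G^{(0)}\circ G^{(0)}$), and confirming that the bound of $2$ cannot be beaten down to $1$ for all pairs, so that ``longest shortest distance'' is genuinely $2$ rather than $1$. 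Once the common-neighbour claim is secured for every node type, the lemma follows immediately.
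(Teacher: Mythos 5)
Your proposal is correct and takes essentially the same route as the paper's own proof: both arguments fix the hub $h^{(0)}_1$ selected in $SC^{(1)}$, use the fact that every node added after $V^{(0)}$ is adjacent to it (via the corona product in step $1$ and the self-coordination links thereafter), and conclude that any non-adjacent pair in $\bigcup_{a=0}^{m}V^{(a)}-V^{(0)}$ has $h^{(0)}_1$ as a common neighbour, so the longest shortest distance is exactly $2$. The paper merely packages this as a case analysis over pairs of seed-graph copies $G^{(0)}_{k^{i}}$, $G^{(0)}_{l^{j}}$ ($k\neq l$, $k=l$ adjacent/non-adjacent), which your common-neighbour formulation subsumes.
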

\begin{proof}
Let the hub selected in $SC^{(1)}$ be $h^{(0)}_1$. We prove the lemma by determining the shortest distance between a node in $G^{(0)}_{k^{i}}$ ($k$ is fixed but $i\in[1,m]$) and any other node in $G^{(0)}_{l^{j}}$ ($j\in[1,m]$). Now we consider the following cases. 
\begin{itemize}
\item[(a)] $k\neq l:$ since each node of $V^{(i)}$ is connected to $h^{(0)}_1$, any node of $G^{(0)}_{k^{i}}$ and $G^{(0)}_{l^{i}}$ is connected to $h^{(0)}_1$. Hence, the shortest distance between the nodes of $G^{(0)}_{k^{i}}$ and $G^{(0)}_{l^{i}}$ is of length $2$.
\item[(b)] $k=l$ and $G^{(0)}=\{K_n,S_n\}:$ it is evident the shortest distance between nodes of $G^{(0)}_{k^{i}}$ is $1$ since the hub $h_1^0$ is connected to all the nodes.
\item[(c)] $k=l$ and $G^{(0)}$ be any seed graph: let $v^{(i)}_{k^{j}}$ and $v^{(i)}_{k^{j^{'}}}$ be any two nodes of $G^{(0)}_{k^{i}}$. Now, there are two sub-cases as follows.
\begin{itemize}
\item[(i)] if $(v^{(i)}_{k^{j}},v^{(i)}_{k^{j^{'}}})=1$ then the shortest distance is evidently $1$.
\item[(ii)]if $(v^{(i)}_{k^{j}},v^{(i)}_{k^{j^{'}}})=0,$ since $(v^{(i)}_{k^{j}},h^{(0)}_1)=1$ and $(v^{(i)}_{k^{j^{'}}},h^{(0)}_1)=1$, the shortest distance between the nodes is $2$ only.
\end{itemize}
\end{itemize}
Hence, the desired result follows.
\end{proof}

\begin{theorem}\label{Thm:KnSn}
Let $G^{(0)}$ be a graph such that degree of at least a single node of the graph is $(n-1)$. Then, the diameter of SCCG generated by $G^{(0)}$ is $2$.
\end{theorem}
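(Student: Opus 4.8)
The plan is to reduce the entire diameter computation to the existence of a single \emph{universal vertex}. By hypothesis $G^{(0)}$ contains a node of degree $n-1$; since $n-1$ is the largest possible degree in an $n$-node simple graph, this forces the seed maximum degree to be $k^0_{max}=n-1$. The principal hub $h:=h^{(0)}_1$ selected in $SC^{(1)}$ is a node of maximum degree, so $\deg_{G^{(0)}}(h)=n-1$ as well. The target is to prove that $h$ is adjacent to every other node of $G^{(m)}$ (for $m\ge 1$), after which the value of the diameter is forced.

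First I would establish universality of $h$ by partitioning $V(G^{(m)})=\bigcup_{a=0}^{m}V^{(a)}$ and checking each block. On $V^{(0)}$ adjacency to $h$ holds since $\deg_{G^{(0)}}(h)=n-1$. On $V^{(1)}$, the corona product $G^{(1)}=G^{(0)}\circ G^{(0)}$ attaches a fresh copy of $G^{(0)}$ to $h$, making those nodes adjacent to $h$, while in $SC^{(1)}$ the remaining new nodes self-coordinate by linking to the selected hub $h$; hence all of $V^{(1)}$ is adjacent to $h$. On $V^{(i)}$ with $i\ge 2$, I would invoke the property recorded just before the theorem, namely that every node added during the corona product $G^{(i)}$, $i\ge 2$, is adjacent to any hub $h\in V^{(0)}$. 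Combining the three blocks shows $(h,w)=1$ for every $w\ne h$, i.e.\ $h$ is a universal vertex of $G^{(m)}$.

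With universality in hand the two bounds follow at once. For the upper bound, given distinct $u,v$ with $u,v\ne h$ we have $(u,h)=1$ and $(h,v)=1$, so $u$ and $v$ are joined by a path of length $2$ and $d(u,v)\le 2$; moreover $d(u,h)=1$ for every $u\ne h$. Thus $\mathrm{diam}(G^{(m)})\le 2$. For the lower bound it suffices to produce one pair at distance exactly $2$: for $m\ge 1$ the graph is not complete, and Lemma~\ref{Lem:LSD} already guarantees a pair of nodes in $\bigcup_{a=1}^{m}V^{(a)}$ whose shortest distance equals $2$, giving $\mathrm{diam}(G^{(m)})\ge 2$. Hence $\mathrm{diam}(G^{(m)})=2$.

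I expect the only real content to lie in the universality claim for the blocks $V^{(i)}$, $i\ge 2$: one must follow the self-coordination rules step by step and confirm that no added node ever fails to attach (through its corona copy, through direct self-coordination, or through the immediate-ancestor-hub mechanism) to the fixed seed hub $h$. Since this statement is asserted in the discussion preceding the theorem, I would cite it as established; the triangle-inequality argument through $h$ for the upper bound and the appeal to Lemma~\ref{Lem:LSD} for the lower bound are then routine.
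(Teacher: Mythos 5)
Your proof is correct and follows essentially the same route as the paper's: both rest on the observation, recorded in the discussion just before the theorem, that the hub $h^{(0)}_1$ selected in $SC^{(1)}$ is adjacent to every node of the SCCG (your block-by-block check over $V^{(0)}$, $V^{(1)}$ and $V^{(i)}$, $i\ge 2$, merely makes this explicit), so any two nodes are joined through it by a path of length at most $2$. If anything, your version is slightly more complete than the paper's two-line argument, since you also supply the lower bound $\mathrm{diam}\ge 2$ via Lemma~\ref{Lem:LSD}, which the paper leaves implicit.
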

\begin{proof}
Let the hub selected in $S^{(1)}$ be $h^{(0)}_1$. Since each node in the $SC^{(i)}$ is connected to $h^{(0)}_1$, every node is reachable from every other node in maximum $2$ steps for all $S^{(i)}$ after $G^{(i)}$ where $i\in[1,m]$. \end{proof}

Since $K_n$ and $S_n$ are also the graphs having the above property as stated in Theorem ~\ref{Thm:KnSn}, the diameter of any SCCG whose seed graph is $K_n$ or $S_n$ is also $2.$

\begin{figure*}
\subfloat[\label{9a}]{
  \includegraphics[width=0.1\textwidth]{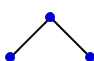}
}
\hspace{10mm}
\subfloat[\label{9b}]{
  \includegraphics[width=0.2\textwidth]{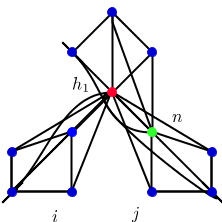}
}
\hspace{5mm}
\subfloat[\label{9c}]{
  \includegraphics[width=0.5\textwidth]{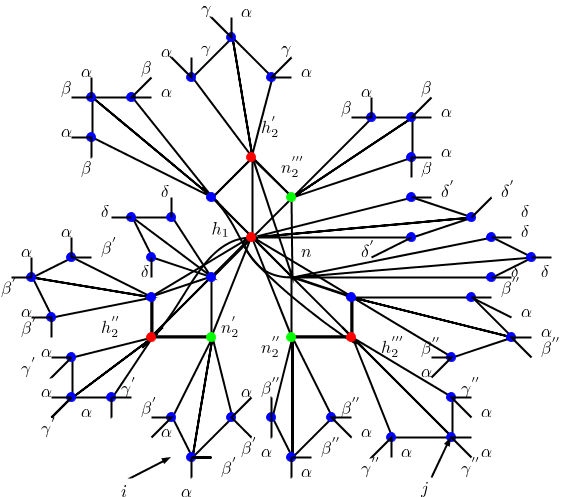}
}
\captionsetup{justification=raggedright,singlelinecheck=false}
\caption{\label{9} Example of diameter of SCCG (a) $S_3$ having its diameter as $2$.(b) In $SC^{(1)}$, the diameter is $2$ (here $h_1$ and $n$ are the hub and non-hub nodes respectively).  (c) In $SC^{(2)}$, the diameter again is $2$. Here, each node has either two or one stub attached with them. Nodes attached to skeletal nodes have two stubs which are shown as $\{\alpha,\beta\}$, $\{\alpha,\gamma\}$, $\{\alpha,\beta^{'}\}$, $\{\alpha,\gamma^{'}\}$, $\{\alpha,\beta^{''}\}$, $\{\alpha,\gamma^{''}\}$ while nodes attached to offshoots are shown as $\delta,\delta^{'},\delta^{''}$.The stubs ($\{\beta,\gamma\}$), ($\{\beta^{''},\gamma^{''}\}$) and ($\{\beta^{'},\gamma^{'}\}$) are connected to their pair of hubs and non-hubs as $(h_2^{'},n_2^{'''}),(h_2^{'''},n_2^{''})$ and $(h_2^{''},n_2^{'})$ respectively (edge from stubs to the nodes are not shown here to avoid visual complexity). The $\alpha$ stub is connected to hub $h_1$. The stubs $\delta,\delta^{'}$ and $\delta^{''}$ are connected to the pair of hub and non-hub as $(h_1,n)$. The red, green and blue colour nodes in the figure are hubs, non-hubs and simple nodes respectively.}
\end{figure*}
In Fig.~\ref{9}, we validate the above result for $G^{(0)}= S_3.$ We now derive the diameter of a SCCG when $G^{(0)}=\{_n,C_n\}$. Note that, we need to consider the shortest distance between any nodes of the node sets $V^{(0)}$ and $\bigcup_{a=0}^{m}V^{(a)}-V^{(0)}$ in order to determine the formula for the diameter.

\begin{theorem}\label{Thm:Cn}
The diameter of SCCG generated by $G^{(0)}=C_n$ is given as follows.
\begin{itemize}
\item[(a)] if $n=4$: the diameter is $2$ after $SC^{(1)},$ and $3$ after $SC^{(i)}$ when $i\geq 2$.
\item[(b)] if $n\in [5,7]:$ the diameter is $3$.
\item[(c)] if $n\ge 8:$ the diameter is $4$.
\end{itemize}
\end{theorem}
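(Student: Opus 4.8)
The plan is to reduce the diameter computation to a few extremal pairs of nodes by combining Lemma~\ref{Lem:LSD} with the dominant role of the master hub $h^{(0)}_1$ selected in $SC^{(1)}$. I would first establish three structural facts for the SCCG built on $C_n$. (i) The hub $h^{(0)}_1$ is adjacent to every node outside $V^{(0)}$: for the step-$1$ nodes this is forced by the corona edges together with the attachments performed during $SC^{(1)}$, and for the nodes created in any step $i\ge 2$ it is the observation, recorded earlier in this section, that each hub in $V^{(0)}$ is adjacent to all nodes added in the products $G^{(i)}$ with $i\ge 2$. (ii) Every seed node $v\in V^{(0)}$ lies within distance $2$ of $h^{(0)}_1$, since the copy of $C_n$ hung on $v$ in the first corona product is adjacent both to $v$ and, through $SC^{(1)}$, to $h^{(0)}_1$. (iii) Two distinct non-hub seed nodes $u,v$ have no common neighbour outside $V^{(0)}$, because the off-seed neighbours of a non-hub seed are only the nodes of the copies attached to it (in the single case of the $SC^{(1)}$ non-hub, together with the hub's first copy), and these node sets are pairwise disjoint across seeds. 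Fact (iii) is the mechanism that keeps the longer distances from collapsing.

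Facts (i)--(ii) yield a uniform upper bound at once. By Lemma~\ref{Lem:LSD} any two nodes of $\bigcup_{a=0}^{m}V^{(a)}-V^{(0)}$ are at distance $\le 2$; a seed node $v$ and a non-seed node $w$ satisfy $d(v,w)\le d(v,h^{(0)}_1)+1\le 3$; and two seed nodes satisfy $d(u,v)\le d(u,h^{(0)}_1)+d(h^{(0)}_1,v)\le 4$. Hence the diameter never exceeds $4$, and it remains to exhibit, for each range of $n$, a witnessing pair and to verify that no shorter route exists. Labelling $C_n$ as $1,\dots,n$ with $h^{(0)}_1=1$ and the non-hub taken to be node $2$, the seed nodes adjacent to the hub are exactly $2$ and $n$; consequently a seed--seed pair can attain distance $4$ only if both endpoints lie in $\{3,\dots,n-1\}$ (otherwise one endpoint is within distance $1$ of the hub, capping the pair at $3$) and their cyclic distance is at least $4$ (otherwise the cycle itself supplies a path of length $\le 3$).

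The witnesses then follow a single pattern. For $n\ge 8$ the pair $\{3,7\}$ lies in $\{3,\dots,n-1\}$ at cyclic distance $4$; invoking (iii) together with a direct check that no neighbour of $3$ (namely $2$, $4$, or a copy node of $3$) is adjacent to any neighbour of $7$ (namely $6$, $8$, or a copy node of $7$), and that the hub neighbours neither $3$ nor $7$, excludes every three-edge route, so $d(3,7)=4$ and the diameter is $4$. For $5\le n\le 7$ no pair inside $\{3,\dots,n-1\}$ reaches cyclic distance $4$, so the previous paragraph forces the diameter to be at most $3$; the matching lower bound comes for $n=6$ from $\{2,5\}$ and for $n=7$ from $\{3,6\}$, each a cyclic-distance-$3$ pair with no common neighbour, and for $n=5$ (where all seed--seed distances are already $\le 2$) from the seed--non-seed pair consisting of node $3$ and the first-step copy hung on node $5$, which is reachable only through node $5$ or through the hub, both of which sit at distance $2$ from node $3$. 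For $n=4$ I would treat the two regimes separately: after $SC^{(1)}$ the only off-seed nodes are the four first-step copies, and since the closed neighbourhood of node $3$ in $C_4$ is $\{2,3,4\}$, each such copy---including the hub's copy, which $SC^{(1)}$ attaches to the non-hub---is reached in $\le 2$ steps, giving diameter $2$; once $SC^{(2)}$ hangs a deeper copy on a node of the hub's copy, that copy neighbours only its parent, the hub, and its ancestor hubs, none of which reaches node $3$ in one step, so its distance from node $3$ rises to $3$.

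The crux will be the lower bound for $n\ge 8$: showing that the length-$4$ hub route between $3$ and $7$ admits no length-$3$ short cut. This is exactly where fact (iii) must be paired with a careful audit of the self-coordination edges---precisely which ancestor hubs a deep node is joined to, and the fact that none of them is a non-hub seed---in order to rule out three-edge paths that dip into the step-$\ge 2$ copies, alongside the obvious cyclic and hub-routed ones. Once this single short-cut exclusion is secured, the remaining cases reduce to routine cyclic-distance bookkeeping.
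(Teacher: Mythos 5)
Your proposal is correct, and it rests on the same skeleton as the paper's proof --- Lemma~\ref{Lem:LSD} for pairs outside $V^{(0)}$, routing through the master hub $h^{(0)}_1$ for the upper bounds, and a case split into seed--seed, mixed, and non-seed pairs --- but it is organized differently and is genuinely stronger on the lower-bound side. The paper proves each case by tracing an explicit path (e.g.\ $(v^{(0)}_{0^{c}},v^{(a)}_{k^{j}},h^{(0)}_1,v^{(a)}_{l^{p}},v^{(0)}_{0^{j}})$ for $n\ge 8$) and then asserting that the traced path is shortest; it never argues that no shorter route exists. Your facts (i)--(iii), the reduction of seed--seed distance-$4$ candidates to pairs in $\{3,\dots,n-1\}$ at cyclic distance at least $4$, and the explicit witnesses ($\{3,7\}$ for $n\ge 8$; $\{2,5\}$, $\{3,6\}$ and the mixed pair for $n\in[5,7]$; the deep copy hung on the hub's first copy for $C_4$ after $SC^{(2)}$) supply exactly the tightness arguments the paper omits, and your two-regime treatment of $C_4$ matches the paper's Case I/Case II split in its part (a)(3). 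The one piece you leave as an outline --- excluding three-edge shortcuts between $3$ and $7$ that dip into step-$\ge 2$ copies --- is indeed the crux, but the audit you call for does close it: by the algorithm, hub and non-hub selections in $SC^{(i)}$, $i\ge 2$, are made among nodes that appeared in step $i-1$ within the same branch, so they never reintroduce edges to non-hub seed nodes, and the ancestor-hub links attach only to hubs, whose sole representative in $V^{(0)}$ is $h^{(0)}_1$, adjacent to neither $3$ nor $7$; hence the off-seed neighbourhoods of $3$ and of $7$ lie in disjoint branches with no cross edges, and $d(3,7)=4$ follows. Nothing in your plan would fail; executed as described it yields a more complete proof than the one in the paper, whose lower bounds are essentially asserted.
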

\begin{proof}
Let the hub selected in $SC^{(1)}$ be $h^{(0)}_1$. 
\begin{itemize}
\item[(a)] For $C_4:$ the longest shortest distances for each of the cases discussed above are as follows
\begin{itemize}
\item[(1)]Any pair of nodes in $V^{(0)}:$ Let $v^{(0)}_{0^{j}}\in V^{(0)}$ and $v^{(0)}_{0^{k}}\in V^{(0)}$ be two diametrically opposite nodes of $V^{(0)}$ such that $(v^{(0)}_{0^{j}},v^{(0)}_{0^{k}})=0$. Then the shortest distance is $2$ in between these nodes.
\item[(2)]Any pair of nodes in $\bigcup_{a=0}^{m}V^{(a)}-V^{(0)}:$ The proof is similar to the proof of Lemma \ref{Lem:LSD} and hence, the shortest distance is $2$.
\item[(3)] Between a node in $\bigcup_{a=0}^{m}V^{(a)}-V^{(0)}$ and a node in $V^{(0)}:$ Let $v^{(1)}_{k^{j}}\in V^{(1)}$ be a fixed node. Consider two cases as follows. Case I: $a=1$ and Case II: $a\ge 2$. 

We prove Case I as follows. Let $v^{(0)}_{0^{j}}\in V^{(0)}$ and $v^{(0)}_{0^{k}}\in V^{(0)}$ be two nodes of $V^{(0)}$ that are not hubs. Let $(v^{(1)}_{k^{j}},v^{(0)}_{0^{j}})=1.$ 
We now want to find out the shortest distance between  $v^{(1)}_{k^{j}}$ and $v^{(0)}_{0^{k}}$. If $(v^{(0)}_{0^{j}},v^{(0)}_{0^{k}})=1$, then the shortest distance is $2$. If $(v^{(0)}_{0^{j}},v^{(0)}_{0^{k}})=0$, then both of the assumed nodes of $V^{(0)}$ are linked with $h^{(0)}_1$ and since $(h^{(0)}_1,v^{(1)}_{k^{j}})=1$, the shortest distance between the desired nodes is $2$. A similar reasoning could be used to prove the shortest distance as $2$ when one of the above assumed nodes of $V^{(0)}$ is a hub and the other node is either adjacent to it or not adjacent.

We now prove Case II i.e. $a\ge 2$. Let $v^{(2)}_{k^{j}}\in V^{(2)}$, $v^{(0)}_{0^{j}}\in V^{(0)}$, $v^{(1)}_{l^{j}}\in V^{(1)}$ such that $(h^{(0)}_1,v^{(0)}_{0^{j}})=0$ and $(v^{(1)}_{l^{j}},v^{(0)}_{0^{j}})=1$. The shortest path would be traced from $v^{(2)}_{k^{j}}$ to $v^{(0)}_{0^{j}}$ as: $(v^{(2)}_{k^{j}},h^{(0)}_1,v^{(1)}_{l^{j}},v^{(0)}_{0^{j}})$. Hence, the longest shortest distance is $3$.
\end{itemize}

\item[(b)] For $n\in[5,7]$, the longest shortest distance for each of the cases discussed above are as follows.
\begin{itemize}
\item[(1)]A pair of nodes in $V^{(0)}:$ Let $v^{(0)}_{0^{j}}\in V^{(0)}$ and $v^{(0)}_{0^{k}}\in V^{(0)}$ be two diametrically opposite nodes of $V^{(0)}$. The shortest distance is $2,3$ and $3$ in between these nodes of $C_5,C_6$ and $C_7$ respectively. 
\item[(2)] A pair of nodes in $\bigcup_{a=0}^{m}V^{(a)}-V^{(0)}:$ The proof is similar to the proof of Lemma \ref{Lem:LSD} and hence, the shortest distance is $2$.
\item[(3)] Between a node in $\bigcup_{a=0}^{m}V^{(a)}-V^{(0)}$ and a node in $V^{(0)}:$
Suppose $v^{(a)}_{k^{j}}$,$v^{(a)}_{l^p}\in V^{(1)}$ and $v^{(0)}_{0^{c}}$, $v^{(0)}_{0^{b}}\in V^{(0)}$, such that $(v^{(a)}_{l^p},v^{(0)}_{0^{b}})=1$ and $(v^{(a)}_{k^j},v^{(0)}_{0^{b}})=0$. Now, we want to find out the shortest distance between $v^{(a)}_{k^{j}}$ and $v^{(0)}_{0^{b}}$. If $v^{(0)}_{0^{b}}$ is a hub, shortest distance is $1$, if it is adjacent to a hub, the distance is $2.$ Otherwise, the shortest distance would be traced as: $(v^{(a)}_{k^{j}},h^{(0)}_1,v^{(a)}_{l^{p}},v^{(0)}_{0^{b}})$. Hence, shortest distance is $3$.

\end{itemize}
Hence, the diameter of SCCG where $G^{(0)}=C_n$ for $n\in[5,7]$ is $3$.
\item[(c)] For $n\ge 8:$ The longest shortest distance for each of the cases discussed above are as follows.
\begin{itemize}
\item[(1)] A pair of nodes in $V^{(0)}$: Suppose $v^{(a)}_{k^{j}}$,$v^{(a)}_{l^p}\in V^{(1)}$, $v^{(0)}_{0^{c}}$, $v^{(0)}_{0^{b}}\in V^{(0)}$, such that $(v^{(a)}_{k^j},v^{(0)}_{0^{c}})=1$, $(v^{(a)}_{l^p},v^{(0)}_{0^{b}})=1$, $v^{(0)}_{0^{c}}$ and $v^{(0)}_{0^{b}}$ are two diametrically opposite nodes of $V^{(0)}$. Now, we want to find out the shortest distance between $v^{(0)}_{0^{c}}$ and $v^{(0)}_{0^{b}}$. If $v^{(0)}_{0^{b}}$ is hub, shortest distance is $1$, if it is adjacent to hub, the distance is $2.$ Otherwise, the shortest distance would be traced as: $(v^{(0)}_{0^{c}},v^{(a)}_{k^{j}},h^{(0)}_1,v^{(a)}_{l^{p}},v^{(0)}_{0^{j}})$. Hence, the shortest distance is $4$.
\item[(2)]A pair of nodes in $\bigcup_{a=0}^{m}V^{(a)}-V^{(0)}:$ The proof is similar to the proof of Lemma \ref{Lem:LSD} and hence, the shortest distance is $2$.
\item[(3)] Between a node in $\bigcup_{a=0}^{m}V^{(a)}-V^{(0)}$ and a node in $V^{(0)}:$ The proof of diameter is similar as above of part (3) of (b).
\end{itemize}
Hence, the desired result follows.
\end{itemize}
\end{proof}

\begin{theorem}\label{Thm:n}
The diameter for all SCCG generated by the seed graph $G^{(0)}=_n$ are as follows.
\begin{itemize}
\item[(a)] the diameter is $3$ if $n=4$.
\item[(b)] for $n\in [5,7],$ the diameter is $3$ if the hubs which are selected at each step of the SC are adjacent to lowest degree nodes, otherwise the diameter is $4$.
\item[(c)] for $n\ge 8$ the diameter is $4$.
\end{itemize}
\end{theorem}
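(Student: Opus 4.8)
The plan is to mirror the three-case decomposition of Lemma~\ref{Lem:LSD} and Theorem~\ref{Thm:Cn}. For the SCCG generated by the path $G^{(0)}=P_n$ I would bound separately the distance between (i) two nodes of $\bigcup_{a=0}^{m}V^{(a)}-V^{(0)}$, (ii) a node of $V^{(0)}$ and such a new node, and (iii) two nodes of $V^{(0)}$, and then take the diameter to be the largest of the three. Two structural facts drive everything. First, the hub $h^{(0)}_1$ chosen in $SC^{(1)}$ must be an interior vertex of the path, say at position $p$ with $2\le p\le n-1$, since the two endpoints have degree $1<2=k^0_{\max}$. Second, every seed node lies within distance $2$ of $h^{(0)}_1$: a node $v$ is joined in $G^{(1)}$ to its own copy of $P_n$, and those copy nodes are linked to $h^{(0)}_1$ during $SC^{(1)}$, so $v\to v'\to h^{(0)}_1$ has length $2$, dropping to $1$ when $v$ is a path-neighbour of the hub and to $0$ when $v=h^{(0)}_1$.

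Cases (i) and (ii) are quick. Case (i) is Lemma~\ref{Lem:LSD}, so the distance is at most $2$. For case (ii), every new node is adjacent to $h^{(0)}_1$, hence a new node $w$ and a seed node $v$ satisfy $d(w,v)\le 1+d(h^{(0)}_1,v)\le 3$; choosing $v$ at distance $2$ from the hub and $w$ a copy of a seed node that is non-adjacent to $v$ shows this bound is attained for every $n\ge 4$. Thus (i) and (ii) contribute $2$ and $3$, and the diameter is already at least $3$. Everything now reduces to case (iii): writing $q_1,q_2$ for the path-positions of two seed nodes $u,v$, I would prove $d(u,v)=\min\bigl(|q_1-q_2|,\,d(u,h^{(0)}_1)+d(v,h^{(0)}_1)\bigr)$, whose second argument never exceeds $4$. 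Consequently the diameter is $4$ exactly when some seed pair realises both a direct separation $|q_1-q_2|\ge 4$ and a via-hub length $4$ (i.e. both $|q_i-p|\ge 2$), and is $3$ otherwise.

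A short position count then produces the three regimes. If the hub is adjacent to a lowest-degree (endpoint) vertex, that is $p\in\{2,n-1\}$, the seed nodes lying at distance $2$ from the hub fill an interval of length $n-4$, so a gap of at least $4$ between two of them exists only when $n\ge 8$. Hence for $n=4$ (where both admissible positions $p\in\{2,3\}$ are endpoint-adjacent) and for $n\in[5,7]$ with an endpoint-adjacent hub the diameter is $3$, while for $n\ge 8$ it rises to $4$. If instead $h^{(0)}_1$ is genuinely interior, $3\le p\le n-2$ (possible only for $n\ge 5$), both endpoints sit at distance $2$ from the hub and their direct separation $n-1\ge 4$ forces $d=4$, giving diameter $4$; for $n\in[5,7]$ this is the ``otherwise'' branch, and for $n\ge 8$ it matches the endpoint-adjacent value. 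This is precisely (a)--(c).

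The real obstacle is the lower bound in case (iii): showing $d(u,v)\ge 4$ for two far seed nodes both at distance $2$ from the hub, rather than merely exhibiting the length-$4$ via-hub route. The delicate task is to exclude a length-$3$ shortcut through the new-node layer. For this I would use that the copy clusters anchored at distinct seed nodes are mutually non-adjacent and that a new node reaches a non-anchor seed node only through $h^{(0)}_1$, so any path avoiding $h^{(0)}_1$ stays essentially within the seed path (length $\ge |q_1-q_2|\ge 4$), while any path meeting $h^{(0)}_1$ has length $\ge d(u,h^{(0)}_1)+d(h^{(0)}_1,v)=4$. One must additionally verify that the non-hub vertex selected in each $SC^{(i)}$ creates no shorter bridge, which holds because it is adjacent only to the copies of the hub's own seed node and to prescribed ancestor offshoots, never to the copies of two far apart seed nodes at once.
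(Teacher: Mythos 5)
Your proposal is correct, and its skeleton is the same one the paper intends: the paper's entire proof of Theorem~\ref{Thm:n} is the single sentence that it is ``similar to the proof of Theorem~\ref{Thm:Cn}'', i.e.\ the three-case decomposition (two new nodes, a new node versus a seed node, two seed nodes) with Lemma~\ref{Lem:LSD} disposing of the first case --- exactly your plan, including the correct reading of the garbled seed graph ``$_n$'' as the path $P_n$. The substantive difference is that you supply precisely what the reduction to the cycle case cannot: in $C_n$ every vertex has maximal degree, so the hub's position is immaterial by symmetry and nothing like the dichotomy in part (b) can arise; for $P_n$ the statement hinges on where the degree-$2$ hub sits, and your formula $d(u,v)=\min\bigl(|q_1-q_2|,\,d(u,h^{(0)}_1)+d(v,h^{(0)}_1)\bigr)$ together with the position count --- an endpoint-adjacent hub $p\in\{2,n-1\}$ confines the seed nodes at hub-distance $2$ to an interval of length $n-4$, so a pair with simultaneous gap $\ge 4$ needs $n\ge 8$, whereas an interior hub $3\le p\le n-2$ (possible only for $n\ge 5$) already puts both endpoints at hub-distance $2$ with direct separation $n-1\ge 4$ --- is the actual content behind (a)--(c), none of which appears in the paper. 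You also argue the lower bound, excluding a length-$3$ shortcut through the copy clusters and through the non-hub, a point the paper never addresses even in Theorem~\ref{Thm:Cn}, where the length-$4$ via-hub route is merely exhibited rather than shown optimal. One caveat: the self-coordination rules (which offshoots link to the non-hub, which $V^{(1)}$ nodes later become hubs) are fixed in the paper only by the $K_3$ example, so your closing claims about the non-hub's neighbourhood are as rigorous as the model's specification permits; under the reading of Fig.~\ref{5} they hold, and your interpretation of the condition in (b) as ``the hub of $SC^{(1)}$ sits at $p\in\{2,n-1\}$'' is the right disambiguation, since only the first-step hub affects any of the three case bounds.
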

\begin{proof}
The proof is similar to the proof of Theorem \ref{Thm:Cn}.
\end{proof}

\begin{remark}
The SCCG generated by seed graphs as mentioned in of Theorem \ref{Thm:KnSn}, Theorem \ref{Thm:Cn} and Theorem \ref{Thm:n} exhibit constant diameters (except in $C_4$ where the constant diameter is observed after first step of the corona graph).
\end{remark}
\begin{figure}
\centering
\subfloat[\label{10a}]{
  \includegraphics[width=0.07\textwidth]{sample_graphs_for_diameter1.png}
}
\hspace{5mm}
\subfloat[\label{10b}]{
  \includegraphics[width=0.07\textwidth]{sample_graphs_for_diameter2.png}
}
\hspace{5mm}
\subfloat[\label{10c}]{
  \includegraphics[width=0.05\textwidth]{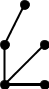}
}
\hspace{5mm}
\subfloat[\label{10d}]{
  \includegraphics[width=0.05\textwidth]{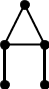}
}
\hspace{5mm}
\subfloat[\label{10e}]{
  \includegraphics[width=0.05\textwidth]{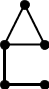}
}
\hspace{5mm}
\subfloat[\label{10f}]{
  \includegraphics[width=0.05\textwidth]{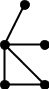}
}
\hspace{5mm}
\subfloat[\label{10g}]{
  \includegraphics[width=0.05\textwidth]{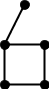}
}
\captionsetup{justification=raggedright,singlelinecheck=false}
\caption{\label{10}Examples of seed graphs having constant diameter (other than of cliques, trees, paths and circuits). Seed graphs in ~\ref{10c}, ~\ref{10d}, ~\ref{10e}, ~\ref{10f} and ~\ref{10g} are isomorphic forms of $G_{10},G_{12},G_{13},G_{14}$ and $G_{16}$ respectively as given in \cite{prvzulj2004modeling,prvzulj2007biological}.}
\end{figure}

We illustrate this remark with an example as shown in Fig.~\ref{9}. However, it is difficult to verify the constant diameter phenomena comprehensively for any arbitrary connected seed graph (for all $n$) empirically and analytically. We have verified this property empirically for a few other seed graph (for instance, the seed graphs of Fig.~\ref{10}). We further verified empirically that constant diameters are $2,2,3,3,3,2$ and $3$ for the seed graphs shown in Fig.~\ref{10a}, Fig.~\ref{10b}, Fig.~\ref{10c}, Fig.~\ref{10d}, Fig.~\ref{10e}, Fig.~\ref{10f} and Fig.~\ref{10g} respectively. Note that, the seed graphs (in Fig.~\ref{10}) i.e. Fig.~\ref{10c}, Fig.~\ref{10d}, Fig.~\ref{10e}, Fig.~\ref{10f} and Fig.~\ref{10g} are isomorphic to $G_{10},G_{12},G_{13},G_{14}$ (although diameter of $G_{14}$ could be proved analytically by Theorem \ref{Thm:KnSn}) and $G_{16}$ in \cite{prvzulj2004modeling,prvzulj2007biological}. It is also evident that the seed graphs given in Theorem \ref{Thm:KnSn}, Theorem \ref{Thm:Cn} and Theorem \ref{Thm:n} include the few Graphlets as given in \cite{prvzulj2004modeling, prvzulj2007biological} except $G_{10},$ $G_{12}$, $G_{13}$, $G_{16}$, $G_{19}$, $G_{20}$, $G_{21},$ and $G_{25}$. Here, one interesting point is to be reiterated that the SCCG model shows the property of constant diameter which is also a property of deterministic Kronecker graphs \cite{leskovec2010kronecker} and a few real world networks (with shrinking/constant diameters) \cite{leskovec2007graph}.
\begin{figure}
\centering
     \includegraphics[width=0.35\textwidth]{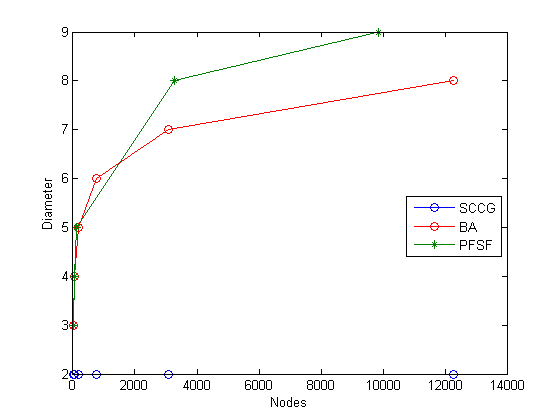} 
\captionsetup{justification=raggedright,singlelinecheck=false}
\caption{\label{11} Comparison of diameter of the SCCG with BA model and seudofractal scale-free (FSF).}
\end{figure}

We compared the diameter of SCCG derived by $G^{(0)}=K_3$ with that of networks generated by the BA and FSF models (as in Fig.~\ref{11}). However, it is evident that the diameter of the BA and FSF models are always increasing like corona graphs while for Kronecker graphs generated with a variation of $S_3$ having self-loops, the diameter is $2.$ 

In the following sub-sections, we compare the clustering coefficient, number of triangles and degree-degree correlation of SCCG generated by a triangle with that of networks generated by BA and FSF models.
\begin{figure*}
\centering
\subfloat[]   {%
   \includegraphics[width=0.32\textwidth]{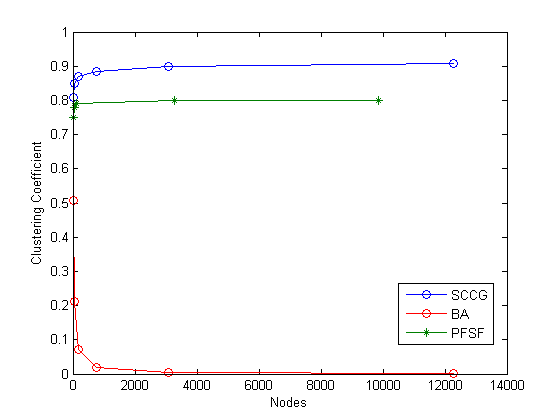} 
   \label{12a}
}
\hspace{0mm} 
\subfloat[]    {%
     \includegraphics[width=0.32\textwidth]{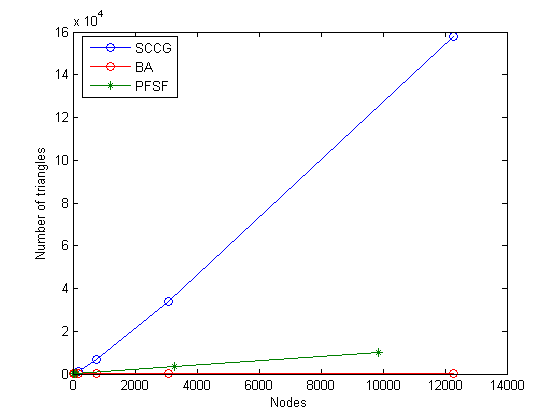} 
        \label{12b}
    }
\captionsetup{justification=raggedright,singlelinecheck=false}  
\caption{\label{12} Comparison of the (a) Clustering coefficient (b) frequencies of triangles for SCCG, BA and FSF models. Both the results are calculated with the help of Gephi \cite{bastian2009gephi}.}
\end{figure*}

\subsection{Clustering Coefficient and Frequency of triangles}
The clustering coefficient($C_v$) of a node $v$ in a network is defined as the probability that the neighbours of $v$ are themselves connected to each other. The average clustering coefficient($\overline{C}$) of a network is the average of the clustering coefficient of all nodes of the network \cite{albert2002statistical, newman2003structure}. In \cite{newman2003social}, the authors noted that social networks have high clustering in comparison to other networks, for instance, in the networks of movie actors \cite{watts1998collective}, company directors \cite{davis2001small}, co-authorship networks \cite{barabasi2002evolution} etc.

We numerically compute the clustering coefficient of SCCG determined by $G^{(0)}=K_3$ as the seed graph and compare it with that of other models (shown in Fig.~\ref{12a}). Since clustering coefficient reflects the frequency of the triangles in a network which is studied extensively in social network research \cite{durak2012degree}, we also compare the frequency of triangles in networks generated by BA and FSF models in Fig.~\ref{12b}. It shows from the results that the frequency of triangles in a SCCG is higher compared to other models. 
\subsection{Degree-degree correlation}
The degree-degree correlation of nodes in a network defines the notion of assortativity and disassortativity \cite{newman2002assortative} of a network. Assortativity signifies the existence of links between nodes of similar degrees whereas the existence of links between nodes of dissimilar degrees is known as disassortativity. Thus, disassortativity implies the propensity of connection of nodes of different types in a network. The correlation coefficient or assortativity coefficient ($r$) \cite{newman2010networks} is evaluated as
\begin{equation}
r=\frac{\sum_{ij}(A(G)_{ij}-\frac{k_i k_j}{2m}k_i k_j)}{\sum_{ij}(k_i\delta_{ij}-\frac{k_i k_j}{2m}k_i k_j)}
\end{equation}
where $A(G)_{ij}$ represents the $ij^{th}$ entry of adjacency matrix of the network, $k_i$ is the degree of node $i$ in the network, $m$ is the number of edges in the network, and $\delta_{ij}$ is the Kronecker delta function. If $r$ is positive, the network follows assortative mixing pattern of nodes whereas negative value of $r$ implies disassortative mixing of nodes in the network.
\begin{figure*}
\centering
\subfloat[]    {%
     \includegraphics[width=0.32\textwidth]{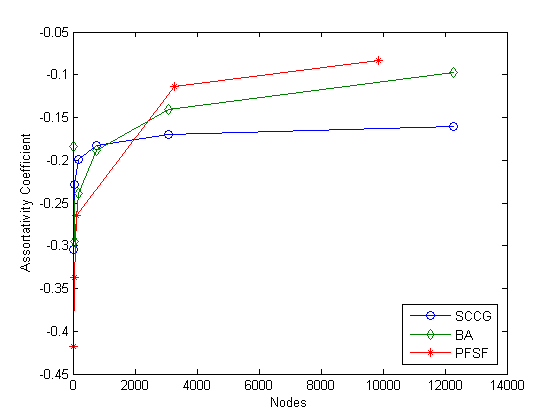} 
        \label{13a}
    }
\hspace{0mm}
\subfloat[]    {%
     \includegraphics[width=0.32\textwidth]{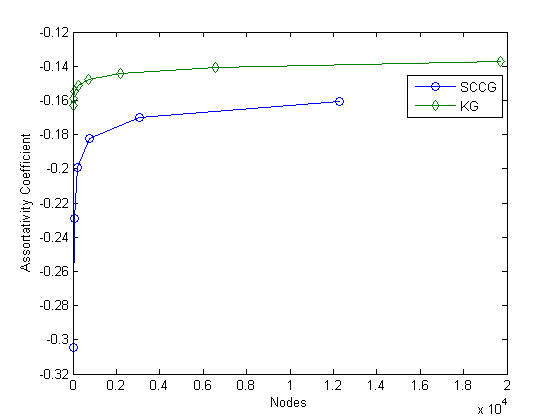} 
        \label{13b}
    }    
    
\captionsetup{justification=raggedright,singlelinecheck=false}  
\caption{\label{13} Comparison of the degree-degree correlation for (a) SCCG, BA and FSF models with $G^{(0)}=K_3$ as the seed graph. (b) SCCG and KG model where SCCG is generated with $G^{(0)}=S_3$ while $KG$ is generated with a seed graph $S_3$ having self-loop at each node.}
\end{figure*}
As we calculate the assortativity coefficient numerically for SCCG with $K_3$ as the seed graph and for networks generated by BA and FSF models, we observe that SCCG is showing disassortative mixing (Fig.~\ref{13a}) while it goes to zero for networks produced by BA and FSF models. However, the KG model shows disassortative mixing as given in Fig.~\ref{13b}. Note that, in Fig.~\ref{13b} the KG is derived by the initial graph $S_3$ with self-loop at each node, and SCCG is produced by the seed graph $S_3$. 

\section{\label{sec:con}Conclusion}
Recently, the phenomena of constant/shrinking diameter is observed in certain real networks. In this work, we mathematically modelled networks which inherit the property of power law degree distribution and constant diameter. We call these networks as Self-coordinated Corona Graphs (SCCG). Further, we numerically investigated the property of high clustering and disassortative mixing in these networks.

\end{document}